\newtheorem{proposition}{Proposition}
\author{Mendes Oulamara\thanks{\url{mendes@deuxfleurs.fr}}~}
\author{Alex Auvolat\thanks{\url{alex@adnab.me}}}
\affil{Deuxfleurs}
\date{}
\title{An algorithm for geo-distributed and redundant storage in Garage}
\begin{document}

\maketitle


\begin{abstract}
	This paper presents an optimal algorithm to compute the assignment of data to storage nodes in the Garage geo-distributed storage system.
	We discuss the complexity of the different steps of the algorithm and metrics that can be displayed to the user.
\end{abstract}

\maketitle

\section{Introduction}

Garage\footnote{\url{https://garagehq.deuxfleurs.fr/}} is an open-source distributed object storage service tailored for self-hosting. It was designed by the Deuxfleurs association\footnote{\url{https://deuxfleurs.fr/}} to enable small structures (associations, collectives, small companies) to share storage resources to reliably self-host their data, possibly with old and non-reliable machines.
To achieve these reliability and availability goals, the data is broken into \emph{partitions} and every partition is replicated over 3 different machines (that we call \emph{nodes}). When the data is queried, it is fetched from one of the nodes. A \emph{replication factor} of 3 ensures good guarantees regarding node failure\cite{Raynal2018}. But this parameter can be another (preferably larger and odd) number.

Moreover, if the nodes are spread over different \emph{zones} (different houses, offices, cities\dots), we can require the data to be replicated over nodes belonging to different zones. This improves the storage robustness against zone failures (such as power outages). To do so, we define a \emph{scattering factor}, that is no more than the replication factor, and we require that the replicas of any partition are spread over this number of zones at least.

In this work, we propose an assignment algorithm that, given the nodes specifications and the replication and scattering factors, computes an optimal assignment of partitions to nodes. We say that the assignment is optimal in the sense that it maximizes the size of the partitions, and hence the effective storage capacity of the system.

Moreover, when a former assignment exists, which is not optimal anymore due to node or zone changes, our algorithm computes a new optimal assignment that minimizes the amount of data to be transferred during the assignment update (the \emph{transfer load}).

We call the set of nodes cooperating to store the data a \emph{cluster}, and a description of the nodes, zones and the assignment of partitions to nodes a \emph{cluster layout}

\subsection{Notations}

Let $k$ be some fixed parameter value, typically 8, that we call the ``partition bits''.
Every object to be stored in the system is split into data blocks of fixed size. We compute a hash $h(\mathbf{b})$ of every such block $\mathbf{b}$, and we define the $k$ first bits of this hash to be the partition number $p(\mathbf{b})$ of the block. This label can take $P=2^k$ different values, and hence there are $P$ different partitions. We denote $\mathbf{P}$ the set of partition labels (i.e. $\mathbf{P}=\llbracket1,P\rrbracket$).

We are given a set $\mathbf{N}$ of $N$ nodes  and a set  $\mathbf{Z}$ of $Z$ zones. Every node $n$ has a non-negative storage capacity $c_n\ge 0$ and belongs to a zone $z_n\in \mathbf{Z}$. We are also given a replication factor $\rho_\mathbf{N}$ and a scattering factor $\rho_\mathbf{Z}$ such that  $1\le \rho_\mathbf{Z} \le \rho_\mathbf{N}$ (typical values would be $\rho_N=\rho_Z=3$).

Our goal is to compute an assignment $\alpha = (\alpha_p^1, \ldots, \alpha_p^{\rho_\mathbf{N}})_{p\in \mathbf{P}}$ such that every partition $p$ is associated to $\rho_\mathbf{N}$ distinct nodes $\alpha_p^1, \ldots, \alpha_p^{\rho_\mathbf{N}} \in \mathbf{N}$ and these nodes belong to at least $\rho_\mathbf{Z}$ distinct zones. Among the possible assignments, we choose one that \emph{maximizes} the effective storage capacity of the cluster. If the layout contained a previous assignment $\alpha'$, we \emph{minimize} the amount of data to transfer during the layout update by making $\alpha$ as close as possible to $\alpha'$. These maximization and minimization are described more formally in the following section.

\subsection{Optimization objectives}

To link the effective storage capacity of the cluster to partition assignment, we make the following assumption:

\begin{equation}
	\tag{H1}
	\text{\emph{All partitions have the same size $s$.}}
\end{equation}

This assumption is justified by the dispersion of the hashing function, when the number of partitions is small relative to the number of stored blocks.

Every node $n$ will store some number $p_n$ of partitions (it is the number of partitions $p$ such that $n$ appears in the $\alpha_p$). Hence the partitions stored by $n$ (and hence all partitions by our assumption) have their size bounded by $c_n/p_n$. This remark leads us to define the optimal size that we will want to maximize:

\begin{equation}
	\label{eq:optimal}
	\tag{OPT}
s^* = \min_{n \in N} \frac{c_n}{p_n}.
\end{equation}

When the capacities of the nodes are updated (this includes adding or removing a node), we want to update the assignment as well. However, transferring the data between nodes has a cost and we would like to limit the number of changes in the assignment. We make the following assumption:

\begin{equation}
	\tag{H2}
	\text{\emph{Node changes happen rarely relatively to data block reads and writes.}}
\end{equation}

This assumption justifies that when we compute the new assignment $\alpha$, it is worth to optimize the partition size \eqref{eq:optimal} first, and then, among the possible optimal solutions, to try to minimize the number of partition transfers. More formally, we minimize the distance between two assignments defined by

\begin{equation}
	d(\alpha, \alpha') := \#\{ (n,p) \in \mathbf{N}\times\mathbf{P} ~|~ n\in \alpha_p \triangle \alpha'_p \}
\end{equation}

where the symmetric difference $\alpha_p \triangle \alpha'_p$ denotes the nodes appearing in one of the assignments but not in both.

\section{Computation of an optimal assignment}

The algorithm that we propose takes as inputs the cluster layout parameters $\mathbf{N}$, $\mathbf{Z}$, $\mathbf{P}$, $(c_n)_{n\in \mathbf{N}}$, $\rho_\mathbf{N}$, $\rho_\mathbf{Z}$, that we defined in the introduction, together with the former assignment $\alpha'$ (if any). The computation of the new optimal assignment $\alpha^*$ is done in three successive steps that will be detailed in the following sections. The first step computes the largest partition size $s^*$ that an assignment can achieve. The second step computes an optimal candidate assignment $\alpha$ that achieves $s^*$ and a heuristic is used in the computation to make it hopefully close to $\alpha'$. The third steps modifies $\alpha$ iteratively to reduces $d(\alpha, \alpha')$ and yields an assignment $\alpha^*$ achieving $s^*$, and minimizing $d(\cdot, \alpha')$ among such assignments.

We will explain in the next section how to represent an assignment $\alpha$ by a flow $f$ on a weighted graph $G$ to enable the use of flow and graph algorithms. The main function of the algorithm can be written as follows.

\paragraph{Algorithm}

\begin{algorithmic}[1]
	\Function{Compute Layout}{$\mathbf{N}$, $\mathbf{Z}$, $\mathbf{P}$, $(c_n)_{n\in \mathbf{N}}$, $\rho_\mathbf{N}$, $\rho_\mathbf{Z}$, $\alpha'$}
	\State $s^* \leftarrow$ \Call{Compute Partition Size}{$\mathbf{N}$, $\mathbf{Z}$, $\mathbf{P}$, $(c_n)_{n\in \mathbf{N}}$, $\rho_\mathbf{N}$, $\rho_\mathbf{Z}$}
	\State $G \leftarrow G(s^*)$
	\State $f \leftarrow$ 	\Call{Compute Candidate Assignment}{$G$, $\alpha'$}
	\State $f^* \leftarrow$ 	\Call{Minimize transfer load}{$G$, $f$, $\alpha'$}
	\State Build $\alpha^*$ from $f^*$
	\State \Return $\alpha^*$
	\EndFunction
\end{algorithmic}

\paragraph{Complexity}
As we will see in the next sections, the worst case complexity of this algorithm is $O(P^2 N^2)$. The minimization of transfer load is the most expensive step, and it can run with a timeout since it is only an optimization step. Without this step (or with a smart timeout), the worst case complexity can be $O((PN)^{3/2}\log C)$ where $C$ is the total storage capacity of the cluster. 

\subsection{Determination of the partition size $s^*$}

We will represent an assignment $\alpha$ as a flow in a specific graph $G$. Remark that such flow must have value $\rho_\mathbf{N}P$. We will not compute the optimal partition size $s^*$ a priori, but we will determine it by dichotomy, as the largest size $s$ such that the maximal flow achievable on $G=G(s)$ has value $\rho_\mathbf{N}P$. We will assume that the capacities are given in a small enough unit (e.g.~megabytes), and we will determine $s^*$ at the precision of the given unit.

Given some candidate size value $s$, we describe the oriented weighted graph $G=(V,E)$ with vertex set $V$ and arc set $E$ (see Figure \ref{fig:flowgraph}).

The set of vertices $V$ contains the source $\mathbf{s}$, the sink $\mathbf{t}$, vertices 
$\mathbf{p^+, p^-}$ for every partition $p$, vertices $\mathbf{x}_{p,z}$ for every partition $p$ and zone $z$, and vertices $\mathbf{n}$ for every node $n$. 

The set of arcs $E$ contains:
\begin{itemize}
	\item ($\mathbf{s}$,$\mathbf{p}^+$, $\rho_\mathbf{Z}$) for every partition $p$;
	\item ($\mathbf{s}$,$\mathbf{p}^-$, $\rho_\mathbf{N}-\rho_\mathbf{Z}$) for every partition $p$;
	\item ($\mathbf{p}^+$,$\mathbf{x}_{p,z}$, 1) for every partition $p$ and zone $z$;
	\item ($\mathbf{p}^-$,$\mathbf{x}_{p,z}$, $\rho_\mathbf{N}-\rho_\mathbf{Z}$) for every partition $p$ and zone $z$;
	\item ($\mathbf{x}_{p,z}$,$\mathbf{n}$, 1) for every partition $p$, zone $z$ and node $n\in z$;
	\item ($\mathbf{n}$, $\mathbf{t}$, $\lfloor c_n/s \rfloor$) for every node $n$.
\end{itemize}

\begin{figure}
	\centering
	\includegraphics[width=\linewidth]{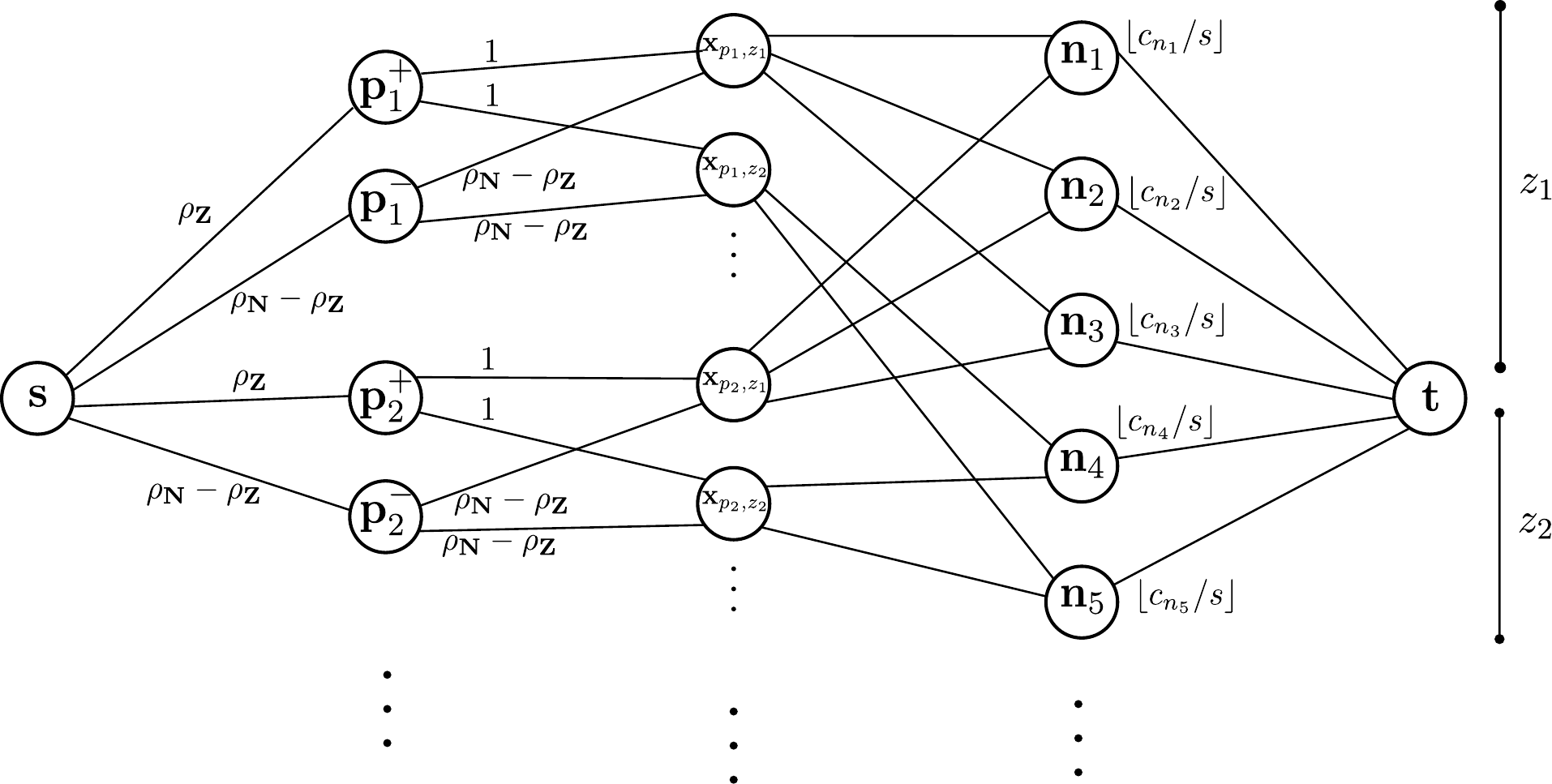}
	\caption{An example of graph $G(s)$. Arcs are oriented from left to right, and unlabeled arcs have capacity 1. In this example, nodes $n_1,n_2,n_3$ belong to zone $z_1$, and nodes $n_4,n_5$ belong to zone $z_2$.}
	\label{fig:flowgraph}
\end{figure}

In the following complexity calculations, we will use the number of vertices and edges of $G$. Remark for now that $\# V = O(PZ)$ and $\# E = O(PN)$.

\begin{proposition}
	An assignment $\alpha$ is realizable with partition size $s$ and replication and scattering factors $(\rho_\mathbf{N},\rho_\mathbf{Z})$ if and only if there exists a maximal flow function $f$ in $G$ with total flow $\rho_\mathbf{N}P$, such that the arcs ($\mathbf{x}_{p,z}$,$\mathbf{n}$, 1) used are exactly those for which $p$ is associated to $n$ in $\alpha$.
\end{proposition}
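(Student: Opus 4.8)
The statement is an equivalence, so I would prove the two implications separately, and throughout I would invoke the integrality theorem for maximum flows to assume every flow under consideration is integral (this is also what makes the phrase ``the arcs $(\mathbf{x}_{p,z},\mathbf{n},1)$ used'' unambiguous). The one structural remark to record first is that the total capacity of the arcs leaving $\mathbf{s}$ is $\sum_{p}\big(\rho_\mathbf{Z}+(\rho_\mathbf{N}-\rho_\mathbf{Z})\big)=\rho_\mathbf{N}P$; hence a flow has value $\rho_\mathbf{N}P$ if and only if it saturates every source arc, and any such flow is automatically maximal. I would also fix once and for all the reading of ``$\alpha$ realizable with partition size $s$'' as ``$p_n\le c_n/s$ for every node $n$''.

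\emph{Flow $\Rightarrow$ assignment.} Given an integral flow $f$ of value $\rho_\mathbf{N}P$, I would set $\alpha_p:=\{\,n : \text{the arc }(\mathbf{x}_{p,z_n},\mathbf{n},1)\text{ carries one unit of }f\,\}$ and verify three things. (i) For a fixed $p$ the only arc into $\mathbf{n}$ coming from the $\mathbf{x}_{p,\cdot}$ layer is $(\mathbf{x}_{p,z_n},\mathbf{n},1)$, so each node absorbs at most one unit on behalf of $p$; combined with conservation at the $\mathbf{x}_{p,z}$ and $\mathbf{p}^{\pm}$ vertices and the fact that exactly $\rho_\mathbf{N}$ units enter $\{\mathbf{p}^+,\mathbf{p}^-\}$ from $\mathbf{s}$, this forces $\#\alpha_p=\rho_\mathbf{N}$. (ii) The saturated arc $(\mathbf{s},\mathbf{p}^+,\rho_\mathbf{Z})$ pushes one unit through each of $\rho_\mathbf{Z}$ distinct unit-capacity arcs $(\mathbf{p}^+,\mathbf{x}_{p,z},1)$, hence through $\rho_\mathbf{Z}$ distinct vertices $\mathbf{x}_{p,z}$, each of which must forward its unit to a node of zone $z$; so $\alpha_p$ meets at least $\rho_\mathbf{Z}$ zones. (iii) Conservation at $\mathbf{n}$ gives $p_n\le\lfloor c_n/s\rfloor\le c_n/s$, so $\alpha$ is realizable with partition size $s$. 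The ``used arcs'' condition holds by the very definition of $\alpha$.

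\emph{Assignment $\Rightarrow$ flow.} Given $\alpha$ realizable with partition size $s$, I would build $f$ explicitly: for each $p$ choose $\rho_\mathbf{Z}$ distinct zones represented in $\alpha_p$ together with one ``primary'' node of $\alpha_p$ in each; route one unit $\mathbf{s}\to\mathbf{p}^+\to\mathbf{x}_{p,z}\to\mathbf{n}$ for every primary node, one unit $\mathbf{s}\to\mathbf{p}^-\to\mathbf{x}_{p,z_n}\to\mathbf{n}$ for every one of the remaining $\rho_\mathbf{N}-\rho_\mathbf{Z}$ nodes of $\alpha_p$, and finally the $p_n$ accumulated units from each $\mathbf{n}$ to $\mathbf{t}$. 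Then I would check the capacities: $(\mathbf{p}^+,\mathbf{x}_{p,z},1)$ is used at most once since the primary zones are distinct; $(\mathbf{p}^-,\mathbf{x}_{p,z},\rho_\mathbf{N}-\rho_\mathbf{Z})$ carries at most $\rho_\mathbf{N}-\rho_\mathbf{Z}$ units since there are only that many non-primary nodes in all; each $(\mathbf{x}_{p,z_n},\mathbf{n},1)$ carries exactly one unit; and $(\mathbf{n},\mathbf{t},\lfloor c_n/s\rfloor)$ is respected because $p_n$ is an integer with $p_n\le c_n/s$. Flow conservation is immediate, the value is $\rho_\mathbf{N}P$ (hence maximal), and the used node-arcs are exactly those prescribed by $\alpha$.

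\emph{Main obstacle.} The bulk of this is routine flow bookkeeping; the genuinely delicate point is property (ii) above — that a value-$\rho_\mathbf{N}P$ flow is \emph{forced} to spread each partition over $\rho_\mathbf{Z}$ distinct zones — and this is precisely what motivates splitting each partition vertex into a ``one unit per zone'' channel $\mathbf{p}^+$ that enforces scattering and an unconstrained channel $\mathbf{p}^-$ carrying the remaining replicas. The matching subtlety on the other side is that the $\mathbf{p}^-$ arc capacity $\rho_\mathbf{N}-\rho_\mathbf{Z}$ really does suffice, which I would justify by a global count of non-primary nodes rather than any per-zone argument.
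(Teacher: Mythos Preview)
Your proposal is correct and follows essentially the same approach as the paper: verify in each direction that the replication constraint is forced by the unit-capacity arcs $(\mathbf{x}_{p,z},\mathbf{n})$, the scattering constraint by the unit-capacity arcs out of $\mathbf{p}^+$, and the storage constraint by the arcs $(\mathbf{n},\mathbf{t})$. Your version is more careful---you invoke integrality explicitly, observe that value $\rho_\mathbf{N}P$ forces saturation of the source arcs, and spell out the explicit construction in the assignment $\Rightarrow$ flow direction (choice of primary zones, routing of the remaining nodes through $\mathbf{p}^-$, and the global count showing the $(\mathbf{p}^-,\mathbf{x}_{p,z})$ capacity is respected)---whereas the paper dismisses that direction with ``one can similarly check''.
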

\begin{proof}
	Given such flow $f$, we can reconstruct a candidate $\alpha$. In $f$, the flow passing through $\mathbf{p^+}$ and $\mathbf{p^-}$ is $\rho_\mathbf{N}$, and since the outgoing capacity of every $\mathbf{x}_{p,z}$ is 1, every partition is associated to $\rho_\mathbf{N}$ distinct nodes. The fraction $\rho_\mathbf{Z}$ of the flow passing through every $\mathbf{p^+}$ must be spread over as many distinct zones as every arc outgoing from $\mathbf{p^+}$ has capacity 1. So the reconstructed $\alpha$ verifies the replication and scattering constraints. For every node $n$, the flow between $\mathbf{n}$ and $\mathbf{t}$ corresponds to the number of partitions associated to $n$. By construction of $f$, this does not exceed $\lfloor c_n/s \rfloor$. We assumed that the partition size is $s$, hence this association does not exceed the storage capacity of the nodes.
	
	In the other direction, given an assignment $\alpha$, one can similarly check that the facts that $\alpha$ respects the replication and scattering constraints, and the storage capacities of the nodes, are necessary condition to construct a maximal flow function $f$.
\end{proof}

\paragraph{Implementation remark.} In the flow algorithm, while exploring the graph, we explore the neighbours of every vertex in a random order to heuristically spread the associations between nodes and partitions.

\paragraph{Algorithm}
With this result mind, we can describe the first step of our algorithm. All divisions are supposed to be integer divisions.
\begin{algorithmic}[1]
	\Function{Compute Partition Size}{$\mathbf{N}$, $\mathbf{Z}$, $\mathbf{P}$, $(c_n)_{n\in \mathbf{N}}$, $\rho_\mathbf{N}$, $\rho_\mathbf{Z}$}
	
	\State Build the graph $G=G(s=1)$
	\State $ f \leftarrow$ \Call{Maximal flow}{$G$}
	\If{$f.\mathrm{total flow} < \rho_\mathbf{N}P$}
	
	\State \Return Error: capacities too small or constraints too strong.
	\EndIf
	
	\State $s^- \leftarrow 1$
	\State $s^+ \leftarrow 1+\frac{1}{\rho_\mathbf{N}}\sum_{n \in \mathbf{N}} c_n$
	
	\While{$s^-+1 < s^+$}
	\State Build the graph $G=G(s=(s^-+s^+)/2)$
	\State $ f \leftarrow$ \Call{Maximal flow}{$G$}
	\If{$f.\mathrm{total flow} < \rho_\mathbf{N}P$}
	\State $s^+ \leftarrow (s^- + s^+)/2$
	\Else
	\State $s^- \leftarrow (s^- + s^+)/2$
	\EndIf
	\EndWhile
	
	\State \Return $s^-$
	\EndFunction
\end{algorithmic}

\paragraph{Complexity}
To compute the maximal flow, we use Dinic's algorithm \cite{Dinitz}. Its complexity on general graphs is $O(\#V^2 \#E)$, but on graphs with edge capacity bounded by a constant, it turns out to be $O(\#E^{3/2})$. The graph $G$ does not fall in this case since the capacities of the arcs incoming to $\mathbf{t}$ are far from bounded. However, the proof of this complexity function works readily for graphs where we only ask the edges \emph{not} incoming to the sink $\mathbf{t}$ to have their capacities bounded by a constant. One can find the proof of this claim in \cite[Section 2]{even1975network}.
The dichotomy adds a logarithmic factor $\log (C)$ where $C=\sum_{n \in \mathbf{N}} c_n$ is the total capacity of the cluster. The total complexity of this first function is hence 
$O(\#E^{3/2}\log C ) = O\big((PN)^{3/2} \log C\big)$.

\paragraph{Metrics}
We can display the discrepancy between the computed $s^*$ and the best size we could have hoped for the given total capacity, that is $C/\rho_\mathbf{N}$.

\subsection{Computation of a candidate assignment}

Now that we have the optimal partition size $s^*$, to compute a candidate assignment it would be enough to compute a maximal flow function $f$ on $G(s^*)$. This is what we do if there is no former assignment $\alpha'$.

If there is some $\alpha'$, we add a step that will heuristically help to obtain a candidate $\alpha$ closer to $\alpha'$. We fist compute a flow function $\tilde{f}$ that uses only the partition-to-node associations appearing in $\alpha'$. Most likely, $\tilde{f}$ will not be a maximal flow of $G(s^*)$. In Dinic's algorithm, we can start from a non maximal flow function and then discover improving paths. This is what we do by starting from $\tilde{f}$. The hope\footnote{This is only a hope, because one can find examples where the construction of $f$ from $\tilde{f}$ produces an assignment $\alpha$ that is not as close as possible to $\alpha'$.} is that the final flow function $f$ will tend to keep the associations appearing in $\tilde{f}$.

More formally, we construct the graph $G_{|\alpha'}$ from $G$ by removing all the arcs $(\mathbf{x}_{p,z},\mathbf{n}, 1)$ where $p$ is not associated to $n$ in $\alpha'$. We compute a maximal flow function $\tilde{f}$ in $G_{|\alpha'}$. The flow $\tilde{f}$ is also a valid (most likely non maximal) flow function on $G$. We compute a maximal flow function $f$ on $G$ by starting Dinic's algorithm with $\tilde{f}$.

\paragraph{Algorithm}
\begin{algorithmic}[1]
	\Function{Compute Candidate Assignment}{$G$, $\alpha'$}
	\State Build the graph $G_{|\alpha'}$
	\State $ \tilde{f} \leftarrow$ \Call{Maximal flow}{$G_{|\alpha'}$}
	\State $ f \leftarrow$ \Call{Maximal flow from flow}{$G$, $\tilde{f}$}
	\State \Return $f$
	\EndFunction
\end{algorithmic}

\paragraph{Remark}
The function ``Maximal flow'' can be just seen as the function ``Maximal flow from flow'' called with the zero flow function as starting flow.

\paragraph{Complexity}
With the considerations of the last section, we have the complexity of Dinic's algorithm $O(\#E^{3/2}) = O((PN)^{3/2})$.

\paragraph{Metrics}

We can display the flow value of $\tilde{f}$, which is an upper bound of the distance between $\alpha$ and $\alpha'$, although this information might not be very relevant to end users.

\subsection{Minimization of the transfer load}

Now that we have a candidate flow function $f$, we want to modify it to make its corresponding assignment $\alpha$ as close as possible to $\alpha'$. Denote by $f'$ the maximal flow corresponding to $\alpha'$, and let $d(f, \alpha')=d(f, f'):=d(\alpha,\alpha')$\footnote{It is the number of arcs of type $(\mathbf{x}_{p,z},\mathbf{n})$ saturated in one flow and not in the other.}. 
We want to build a sequence $f=f_0, f_1, f_2 \dots$ of maximal flows such that $d(f_i, \alpha')$ decreases as $i$ increases. The distance being a non-negative integer, this sequence of flow functions must be finite. We now explain how to find some improving $f_{i+1}$ from $f_i$.

For any maximal flow $f$ in $G$, we define the oriented weighted graph $G_f=(V, E_f)$ as follows. The vertices of $G_f$ are the same as the vertices of $G$. $E_f$ contains the arc $(v_1,v_2, w)$ between vertices $v_1,v_2\in V$ with weight $w$ if and only if the arc $(v_1,v_2)$ is not saturated in $f$ (i.e. $c(v_1,v_2)-f(v_1,v_2) \ge 1$, we also consider reversed arcs). The weight $w$ is: 
\begin{itemize}
	\item  $-1$  if $(v_1,v_2)$ is of type $(\mathbf{x}_{p,z},\mathbf{n})$ or $(\mathbf{n},\mathbf{x}_{p,z})$ and is saturated in only one of the two flows $f,f'$;
	\item $+1$  if $(v_1,v_2)$ is of type $(\mathbf{x}_{p,z},\mathbf{n})$ or $(\mathbf{n},\mathbf{x}_{p,z})$ and is saturated in either both or none of the two flows $f,f'$;
	\item $0$ otherwise.
\end{itemize}

If $\gamma$ is a simple cycle of arcs in $G_f$, we define its weight $w(\gamma)$ as the sum of the weights of its arcs. We can add $+1$ to the value of $f$ on the arcs of $\gamma$, and by construction of $G_f$ and the fact that $\gamma$ is a cycle, the function that we get is still a valid flow function on $G$, it is maximal as it has the same flow value as $f$. We denote this new function $f+\gamma$.

\begin{proposition}
	Given a maximal flow $f$ and a simple cycle $\gamma$ in $G_f$, we have $d(f+\gamma, f') - d(f,f') = w(\gamma)$.
\end{proposition}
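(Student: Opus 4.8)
The plan is to track how the distance $d(\cdot, f')$ changes arc by arc as we go from $f$ to $f + \gamma$. Since $d(f, f')$ counts exactly the arcs of type $(\mathbf{x}_{p,z}, \mathbf{n})$ that are saturated in one of $f, f'$ but not the other, and since $f + \gamma$ differs from $f$ only on the arcs of $\gamma$, the difference $d(f+\gamma, f') - d(f, f')$ is a sum of contributions, one for each arc of $\gamma$ that is of type $(\mathbf{x}_{p,z}, \mathbf{n})$ (in either orientation). Arcs of $\gamma$ of weight $0$ — i.e. not of this type — do not touch any saturation status that $d$ counts, so they contribute nothing, which matches their weight. So it suffices to check, for a single arc $a$ of $\gamma$ of type $(\mathbf{x}_{p,z}, \mathbf{n})$, that the change in its contribution to $d$ equals $w(a) \in \{-1, +1\}$.

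First I would set up the case analysis on a single such arc $a$. Pushing $+1$ of flow along $\gamma$ flips the saturation status of $a$ in $f$: if $a$ appears in $G_f$ as a forward arc (unsaturated in $f$), then after adding $\gamma$ it becomes saturated; if $a$ appears as a reversed arc (so the forward arc was saturated in $f$), then after adding $\gamma$ it becomes unsaturated. Either way, the saturation status of $a$ with respect to $f$ toggles, while its status with respect to $f'$ is fixed. Now split into two cases according to $w(a)$. If $w(a) = -1$, then $a$ is saturated in exactly one of $f, f'$ before the move, so $a$ currently contributes $1$ to $d(f, f')$; toggling $a$'s status in $f$ makes it saturated in both or neither, so it contributes $0$ to $d(f+\gamma, f')$, a change of $-1 = w(a)$. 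If $w(a) = +1$, then $a$ is saturated in both or neither before the move, contributing $0$ to $d(f, f')$; toggling makes it saturated in exactly one of the two, contributing $1$ to $d(f+\gamma, f')$, a change of $+1 = w(a)$. Summing over all arcs of $\gamma$ gives $d(f+\gamma, f') - d(f, f') = \sum_{a \in \gamma} w(a) = w(\gamma)$.

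One point I would be careful about: an arc $(\mathbf{x}_{p,z}, \mathbf{n})$ and its reverse both concern the same partition-to-node association $(n, p)$, so I should make sure the simple cycle $\gamma$ does not use both orientations of the same underlying arc — but $\gamma$ being a simple cycle of arcs in $G_f$ means it visits each vertex at most once, hence cannot traverse both $(\mathbf{x}_{p,z}, \mathbf{n})$ and $(\mathbf{n}, \mathbf{x}_{p,z})$, so each underlying association is affected at most once and the contributions are genuinely independent. I would also note that since all arc capacities in $G$ (other than those into $\mathbf{t}$, which are not of the relevant type) are $1$, "saturated" and "flow equals $1$" coincide for the arcs that matter, so the toggling described above is the only possible behavior. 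The argument is essentially bookkeeping; the only mild subtlety — and the closest thing to an obstacle — is being scrupulous that the weight convention in the definition of $G_f$ was chosen precisely so that these per-arc changes line up, and that reversed arcs are handled with the correct sign, which the case analysis above confirms.
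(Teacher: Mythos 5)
Your proof is correct and rests on exactly the same key observation as the paper's: pushing one unit of flow along $\gamma$ toggles, for each arc of type $(\mathbf{x}_{p,z},\mathbf{n})$ on $\gamma$, whether its saturation status agrees with that in $f'$, while arcs not of this type contribute nothing. The paper packages this as an algebraic identity on indicator sums whereas you do the per-arc case analysis directly (and you rightly note that a simple cycle cannot use both orientations of the same association), but it is the same argument.
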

\begin{proof}
	Let $X$ be the set of arcs of type $(\mathbf{x}_{p,z},\mathbf{n})$. Then we can express $d(f,f')$ as
	\begin{align*}
		d(f,f') & = \#\{e\in X ~|~ f(e)\neq f'(e)\}
		= \sum_{e\in X} 1_{f(e)\neq f'(e)} \\
		& = \frac{1}{2}\big( \#X + \sum_{e\in X} 1_{f(e)\neq f'(e)} - 1_{f(e)= f'(e)} \big).
	\end{align*}
	We can express the cycle weight as
	\begin{align*}
		w(\gamma) &  = \sum_{e\in X, e\in \gamma} - 1_{f(e)\neq f'(e)} + 1_{f(e)= f'(e)}.
	\end{align*}
	Remark that since we passed one unit of flow in $\gamma$ to construct $f+\gamma$, we have for any $e\in X$, $f(e)=f'(e)$ if and only if $(f+\gamma)(e) \neq f'(e)$.
	Hence
	\begin{align*}
		w(\gamma) &  = \frac{1}{2}(w(\gamma) + w(\gamma)) \\
		&= \frac{1}{2} \Big( 
		\sum_{e\in X, e\in \gamma} - 1_{f(e)\neq f'(e)} +  1_{f(e)= f'(e)} \\
		& \qquad +
		\sum_{e\in X, e\in \gamma} 1_{(f+\gamma)(e)\neq f'(e)} +  1_{(f+\gamma)(e)= f'(e)}
		\Big).
	\end{align*}
	Plugging this in the previous equation, we find that 
	$$d(f,f')+w(\gamma) = d(f+\gamma, f').$$
\end{proof}

This result suggests that given some flow $f_i$, we just need to find a negative cycle $\gamma$ in $G_{f_i}$ to construct $f_{i+1}$ as $f_i+\gamma$. The following proposition ensures that this greedy strategy reaches an optimal flow. 

\begin{proposition}
	For any maximal flow $f$, $G_f$ contains a negative cycle if and only if there exists a maximal flow $f^*$ in $G$ such that $d(f^*, f') < d(f, f')$.
\end{proposition}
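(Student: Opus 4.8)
The plan is to prove the two implications separately. The ``only if'' direction is immediate from the preceding proposition: if $\gamma$ is a negative cycle of $G_f$, then $f+\gamma$ is again a maximal flow of $G$ — it has the same value as $f$ — and by that proposition $d(f+\gamma,f') = d(f,f') + w(\gamma) < d(f,f')$, so $f^* := f+\gamma$ witnesses the right-hand side. The substance of the statement is the converse, which I would establish by the classical argument that a circulation of negative total weight must contain a negative simple cycle.

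So suppose $f^*$ is a maximal flow of $G$ with $d(f^*,f') < d(f,f')$. (All flows here are taken integral, which is the only relevant case: Dinic's algorithm returns integral flows and $f'$ comes from an assignment.) Consider the circulation $g := f^* - f$; it satisfies flow conservation at every vertex of $V$, including $\mathbf{s}$ and $\mathbf{t}$, because $f$ and $f^*$ have the same value. I would first transport $g$ into the residual graph $G_f$: for a physical arc $e=(u,v)$, if $g(e)>0$ then $f(e)<f^*(e)\le c(e)$, so the forward residual arc $(u,v)$ lies in $G_f$, and if $g(e)<0$ then $f(e)>f^*(e)\ge 0$, so the reverse residual arc $(v,u)$ lies in $G_f$; assigning $|g(e)|$ units to whichever of these two arcs is relevant defines a non-negative integral circulation $\hat g$ on $G_f$. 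The point to record here is that on an arc $e$ of type $(\mathbf{x}_{p,z},\mathbf{n})$, whose capacity is $1$, both $f(e)$ and $f^*(e)$ lie in $\{0,1\}$, so $\hat g$ puts $0$ or $1$ unit on the corresponding residual arc.

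Next I would decompose $\hat g$ into simple directed cycles of $G_f$ in the usual way — repeatedly extract a cycle from the support of the remaining circulation and subtract its minimum value — obtaining $\hat g = \sum_j \lambda_j\, 1_{\gamma_j}$ with integers $\lambda_j \ge 1$ and each $\gamma_j$ a simple cycle of $G_f$. The crux is then the identity
\[
\sum_j \lambda_j\, w(\gamma_j) \;=\; d(f^*,f') - d(f,f').
\]
To get it, regroup the left-hand side arc by arc as $\sum_a w(a)\,\hat g(a)$, where $a$ runs over residual arcs: residual arcs that are not of type $(\mathbf{x}_{p,z},\mathbf{n})$ or $(\mathbf{n},\mathbf{x}_{p,z})$ have weight $0$, and each residual arc of one of those two types carries $\hat g$-value $0$ or $1$ by the previous observation. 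Writing $X$ for the set of arcs of type $(\mathbf{x}_{p,z},\mathbf{n})$, as in the proof of the preceding proposition, one has $d(f^*,f') - d(f,f') = \sum_{e \in X \,:\, f(e) \ne f^*(e)} \big( 1_{f^*(e)\ne f'(e)} - 1_{f(e)\ne f'(e)} \big)$; and — exactly as in that proof, read one arc at a time — the weight of the residual arc that $\hat g$ uses for such an $e$ was defined to equal precisely $1_{f^*(e)\ne f'(e)} - 1_{f(e)\ne f'(e)}$. Matching the two sums arc by arc yields the identity. Since $d(f^*,f') - d(f,f') < 0$ while every $\lambda_j$ is a positive integer, at least one $w(\gamma_j)$ is negative, and this $\gamma_j$ is the sought negative simple cycle of $G_f$.

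The step I expect to be the main obstacle is the displayed identity, and inside it the two bookkeeping checks: that the cycle decomposition never uses a residual arc together with its reverse — which holds because $\hat g$ is, by construction, supported on at most one of the two residual directions of each physical arc — and that the weight attached to a residual arc of type $(\mathbf{x}_{p,z},\mathbf{n})$ or $(\mathbf{n},\mathbf{x}_{p,z})$ correctly records the change in $d$ whichever of the two directions the decomposition takes. Both are light variants of the computation already performed for the preceding proposition, so I expect them to be routine but to require care with the case distinctions.
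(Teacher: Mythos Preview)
Your proposal is correct and follows essentially the same route as the paper: decompose the difference $f^*-f$ into simple cycles in the residual graph and argue that, since the total change in $d$ is negative, one of these cycles must have negative weight. The paper phrases this via an Eulerian multigraph $M_{f,f^*}$ carrying $(f^*(u,v)-f(u,v))_+$ copies of each arc and appeals to Proposition~2 rather than writing out your displayed identity explicitly; it also omits the ``only if'' direction, which you correctly note is immediate from that proposition.
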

\begin{proof}
	Suppose that there is such flow $f^*$. Define the oriented multigraph $M_{f,f^*}=(V,E_M)$ with the same vertex set $V$ as in $G$, and for every $v_1,v_2 \in V$, $E_M$ contains $(f^*(v_1,v_2) - f(v_1,v_2))_+$  copies of the arc $(v_1,v_2)$. For every vertex $v$, its total degree (meaning its outer degree minus its inner degree) is equal to 
	\begin{align*}
		\deg v & = \sum_{u\in V} (f^*(v,u) - f(v,u))_+ - \sum_{u\in V} (f^*(u,v) - f(u,v))_+  \\
		& = \sum_{u\in V} f^*(v,u) - f(v,u) = \sum_{u\in V} f^*(v,u) - \sum_{u\in V}  f(v,u).
	\end{align*}
	The last two sums are zero for any inner vertex since $f,f^*$ are flows, and they are equal on the source and sink since the two flows are both maximal and have hence the same value. Thus, $\deg v = 0$ for every vertex $v$.
	
	This implies that the multigraph $M_{f,f^*}$ is the union of disjoint simple cycles. $f$ can be transformed into $f^*$ by pushing a mass 1 along all these cycles in any order. Since $d(f^*, f')<d(f,f')$, there must exists one of these simple cycles $\gamma$ with $d(f+\gamma, f') < d(f, f')$. Finally, since we can push a mass in $f$ along $\gamma$, it must appear in $G_f$. Hence $\gamma$ is a cycle of $G_f$ with negative weight.
\end{proof}

In the next section we describe the corresponding algorithm. Instead of discovering only one cycle per iteration, we are allowed to discover a set $\Gamma$ of disjoint negative cycles. 

\paragraph{Algorithm}
\begin{algorithmic}[1]
	\Function{Minimize transfer load}{$G$, $f$, $\alpha'$}
	\State Build the graph $G_f$
	\State $\Gamma \leftarrow$ \Call{Detect Negative Cycles}{$G_f$}
	\While{$\Gamma \neq \emptyset$}
	\ForAll{$\gamma \in \Gamma$}
	\State $f \leftarrow f+\gamma$
	\EndFor
	\State Update $G_f$
	\State $\Gamma \leftarrow$ \Call{Detect Negative Cycles}{$G_f$}
	\EndWhile
	\State \Return $f$
	\EndFunction
\end{algorithmic}

\paragraph{Complexity}
The distance $d(f,f')$ is bounded by the maximal number of differences in the associated assignment. If these assignment are totally disjoint, this distance is $2\rho_N P$. At every iteration of the While loop, the distance decreases, so there is at most $O(\rho_N P) = O(P)$ iterations.

The detection of negative cycles is done with the Bellman-Ford algorithm, whose complexity should normally be $O(\#E\#V)$. In our case, it amounts to $O(P^2ZN)$. Multiplied by the complexity of the outer loop, it amounts to $O(P^3ZN)$ which is a lot when the number of partitions and nodes starts to be large. To avoid that, we adapt the Bellman-Ford algorithm.

The Bellman-Ford algorithm runs $\#V$ iterations of an outer loop, and an inner loop over $E$. The idea is to compute the shortest paths from a source vertex $v$ to all other vertices. After $k$ iterations of the outer loop, the algorithm has computed all shortest path of length at most $k$. All simple paths have length at most $\#V-1$, so if there is an update in the last iteration of the loop, it means that there is a negative cycle in the graph. The observation that will enable us to improve the complexity is the following:

\begin{proposition}
	In the graph $G_f$ (and $G$), all simple paths have a length at most $4N$.
\end{proposition}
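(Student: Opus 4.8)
The plan is to exploit the rigid layered structure of $G$. Assign to each vertex a \emph{type}: source ($\mathbf{s}$), partition vertices ($\mathbf{p}^+,\mathbf{p}^-$), relay vertices ($\mathbf{x}_{p,z}$), node vertices ($\mathbf{n}$), sink ($\mathbf{t}$). Every arc of $G$ joins vertices of consecutive types in the sequence $\mathbf{s}-\mathbf{p}^\pm-\mathbf{x}_{p,z}-\mathbf{n}-\mathbf{t}$, and since the arcs of $G_f$ are arcs of $G$ together with their reversals, the same is true in $G_f$. Hence any simple path of $G_f$ contains at most $N$ vertices of type $\mathbf{n}$ (and at most one of type $\mathbf{s}$, one of type $\mathbf{t}$). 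Deleting its node vertices splits the path into at most $N+1$ consecutive ``bridges'', each of whose interior vertices are only of type $\mathbf{s}$, $\mathbf{p}^\pm$ or $\mathbf{x}_{p,z}$; the bridge containing $\mathbf{t}$, if any, is reduced to the single vertex $\mathbf{t}$, since $\mathbf{t}$ is adjacent only to node vertices.

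The heart of the proof is to bound the length of one bridge by a constant. Two observations do this. First, $\mathbf{n}$ is adjacent only to relay vertices $\mathbf{x}_{p,z}$ with $\mathbf{n}\in z$, so the interior of a nontrivial bridge both begins and ends with a relay vertex. Second --- the key point --- \emph{a bridge that does not pass through $\mathbf{s}$ lives inside a single partition}: from $\mathbf{x}_{p,z}$ the only partition vertices reachable are $\mathbf{p}^+$ and $\mathbf{p}^-$, and from either of $\mathbf{p}^+,\mathbf{p}^-$ the only relay vertices reachable are the $\mathbf{x}_{p,\cdot}$, so the partition index cannot change along the bridge. As only the two vertices $\mathbf{p}^+,\mathbf{p}^-$ are then in play, the interior of such a bridge is an alternating sequence $\mathbf{x}_{p,z_1},\mathbf{p}^{\varepsilon_1},\mathbf{x}_{p,z_2},\mathbf{p}^{\varepsilon_2},\dots$ of at most five vertices, hence the bridge has at most a bounded number of arcs. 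There is at most one bridge through $\mathbf{s}$; cutting it at $\mathbf{s}$ and applying the single-partition argument to each half bounds it by a constant too, and the (at most two) bridges at the ends of the path that touch no node vertex are shorter still. Adding the $\le N$ node vertices and the $\le N+1$ bounded bridges gives a bound of the form $cN$, and a careful accounting of the per-bridge sizes yields $c=4$. For $G$ itself the statement is immediate, since with no reversed arcs a simple path is monotone along $\mathbf{s}-\mathbf{p}^\pm-\mathbf{x}_{p,z}-\mathbf{n}-\mathbf{t}$ and therefore has at most $4\le 4N$ arcs.

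The only genuinely delicate part is this last accounting. One has to treat separately the interior bridges, the unique bridge through $\mathbf{s}$, the trivial bridge $\mathbf{t}$, and the two endpoint bridges of the path --- these do not all admit the same bound --- and combine them so that the total is exactly $4N$ rather than a looser multiple of $N$. The two structural facts above (the type-path, and ``a bridge avoiding $\mathbf{s}$ sits in a single partition, which has only two partition vertices'') are the real content; everything else is case analysis and counting.
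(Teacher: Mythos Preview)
Your decomposition into ``bridges'' between successive node vertices is exactly the idea behind the paper's proof, which simply asserts that at most four arcs separate two successive $\mathbf{n}$'s and concludes. So the overall strategy matches.

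There is, however, one ingredient the paper uses that you omit. Because we are working in $G(s^*)$ and $f$ is a maximal flow of value $\rho_{\mathbf N}P$, which equals the total out-capacity of the source, \emph{every} arc leaving $\mathbf{s}$ is saturated. Hence $\mathbf{s}$ has no outgoing arc in $G_f$ at all: it can only appear as the terminal vertex of a simple path. This kills your ``bridge through $\mathbf{s}$'' case outright and is the paper's only stated observation; you replace it by a special-case analysis (cut at $\mathbf{s}$, bound each half), which works but is unnecessary once you notice this.

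There is also a genuine gap in your accounting. Your own per-bridge analysis gives an interior of at most five vertices, i.e.\ a bridge $\mathbf{n}_1,\mathbf{x}_{p,z_1},\mathbf{p}^{+},\mathbf{x}_{p,z_2},\mathbf{p}^{-},\mathbf{x}_{p,z_3},\mathbf{n}_2$ of \emph{six} arcs, not four. Summing $N-1$ such bridges with two end bridges gives a bound of the form $6N+O(1)$, and nothing in your write-up explains how ``careful accounting'' brings the constant down to $4$. (In fairness, the paper's one-line proof asserts the bound of four arcs between successive $\mathbf{n}$'s without justification either, and the six-arc bridge above shows that literal claim is too strong.) Since the proposition is used only to cap the number of Bellman--Ford rounds, the content that actually matters is the $O(N)$ bound, which your argument does establish; but if you want the stated constant you must either sharpen the per-bridge estimate or amortize across bridges, and that step is currently missing.
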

\begin{proof}
	Since $f$ is a maximal flow, there is no outgoing edge from $\mathbf{s}$ in $G_f$. One can thus check than any simple path of length 4 must contain at least two node of type $\mathbf{n}$. Hence on a path, at most 4 arcs separate two successive nodes of type $\mathbf{n}$. 
\end{proof}

Thus, in the absence of negative cycles, shortest paths in $G_f$ have length at most $4N$. So we can do only $4N+1$ iterations of the outer loop in the Bellman-Ford algorithm. This makes the complexity of the detection of one set of cycle to be $O(N\#E) = O(N^2 P)$.

With this improvement, the complexity of the whole algorithm is, in the worst case, $O(N^2P^2)$. However, since we detect several cycles at once and we start with a flow that might be close to the previous one, the number of iterations of the outer loop might be smaller in practice.

\paragraph{Metrics}
We can display the node and zone utilization ratio, by dividing the flow passing through them divided by their outgoing capacity. In particular, we can pinpoint saturated nodes and zones (i.e. used at their full potential).

We can display the distance to the previous assignment, and the number of partition transfers.

\section{Related work}

In previous versions of Garage, we iterated through many algorithms to build an assignment of partitions to nodes, always with unsatisfactory results.
These previous attempts, all based on existing work, are described in this section.

\paragraph{Basic consistent hashing with zone awareness}
In this algorithm, we use the simple consistent hashing ring described in Dynamo~\cite{decandia2007dynamo}.
We slightly adapt it to support nodes in different zones and the requirement to spread replicas over as many zones as possible:
when looking up the nodes associated to a data block, we walk the ring starting from the position corresponding to its hash, but we skip
nodes that are in a zone from which we have already selected a node (except if there are no more distinct zones to take nodes from).
This method had the disadvantage of giving a very unbalanced distribution of data between nodes. For example, suppose that there are many consecutive
nodes on the ring that are in zones 1 and 2, followed by one node in zone 3. Then that node will store a copy of all data blocks whose hashes are in the
interval before it that contains only nodes of zone 1 and 2.

\paragraph{Arbitrary ring positions vs.~fixed partition boundaries}
As already discussed in the Dynamo paper~\cite{decandia2007dynamo} (see the three different strategies presented in Figure~7),
using the hashes of node identifiers as positions on the consistent hashing ring
makes the intervals between these positions of wildly varying sizes, worsening the imbalance of storage affected to all nodes.
To resolve this issue, we very rapidly switched to dividing the consistent hashing ring into equally sized parts (what we call partitions),
as shown in Dynamo's strategies 2 and 3.
To ensure that all nodes handle a number of partitions strictly proportional to their capacity,
we tried using the MagLev algorithm~\cite{eisenbud2016maglev} to assign partitions to nodes.
However, just doing this does not solve the zone awareness issue; continuing to use the simple ring walking where nodes are skipped still produces
a very imbalanced distribution.

\paragraph{Multi-zone aware MagLev}
Our next try was to improve the MagLev algorithm to be multi-zone aware.
Now, instead of assigning a single node to each ring position (each partition) and walking the ring to find three nodes starting at a given key's hash,
we directly assign a set of three nodes to each partition and completely abandon ring walking.
The first node of the three is computed for all partitions by using the standard MagLev algorithm.
Then, the next two are computed using a variant of MagLev that skips assigning nodes to partitions when they are in zones of nodes already selected
for that partition (unless there are no more distinct zones available), selecting other nodes instead.
This way, we ensure that the three nodes assigned to each partition are in as many distinct zones as possible.
This method provided perfectly equitable distribution of data among nodes, however when layout changes occurred, the entire assignment was recomputed
without taking into account the previous one, and thus there was no way to ensure that a minimal amount of data was displaced from one node to another.

\paragraph{Stateful assignment algorithms}
In all of the previous iterations, we were limiting ourselves to algorithms that were stateless:
the assignment had to be computed in a deterministic way from only the list of node identifiers and their zone and capacity information,
using hash functions to provide pseudo-randomness.
To be able to minimize the transfer load on layout changes, we had to switch to a stateful method where the entire assignment is computed offline and then propagated to all cluster nodes.
It can now be computed using any arbitrary optimization algorithm that can take as an input the previous assignment to minimize transfer load.
This method was introduced in Garage version 0.5 with a simple greedy optimization algorithm that was not optimal, which was in use until version 0.8.
The final, optimal assignment algorithm is the one we presented in this paper, which will be included in Garage version 0.9 and forward.

\section*{Acknowledgements}
This project has received funding from the European Union's Horizon 2021 research and innovation programme within the framework of the NGI-POINTER Project funded under grant agreement N° 871528.

\bibliography{geodistrib} 
\bibliographystyle{ieeetr}

\end{document}